\newcommand{\myproblem}[3]{
  \par
\begin{description}
      \item[Problem] #1
      \item[Input] #2
      \item[Question] #3
    \end{description}
\par
}
\let\doendproof\endproof
\renewcommand{\endproof}{\hfill$\qed$\doendproof}
\newcommand{\pqprob}{\textsc{Sequential PQ-Or\-der\-ing}\xspace}
\newcommand{\simpq}{\textsc{Simultaneous PQ-Ordering}\xspace}
\newcommand{\tlp}{\textsc{proper $\mathcal T$-Level Planarity}\xspace}
\newcommand{\hvprob}{\textsc{Semi-Fixed HV-Segment Intersection Graph Recognition}\xspace}
\newcommand{\hvp}{\textsc{SF-HV-SEG}\xspace}
\title{Segment Intersection Representations, Level Planarity and Constrained Ordering Problems}
\titlerunning{HV-Segment Intersection Representations}
\author{
  Simon D.~ Fink\inst{1}\orcidID{0000-0002-2754-1195} \and
Matthias Pfretzschner\inst{2}\orcidID{0000-0002-5378-1694} \and
Peter Stumpf\inst{3,4}\orcidID{0000-0003-0531-9769}
}
\institute{
  Algorithms and Complexity Group, Technische Universität Wien, Austria
  \email{sfink@ac.tuwien.ac.at} \and
Faculty of Computer Science and Mathematics, University of Passau, Germany
  \email{pfretzschner@fim.uni-passau.de} \and
  Department of Theoretical Computer Science, Faculty of Information Technology, Czech Technical University in Prague, Czech Republic \and
  Department of Applied Mathematics, Charles University Prague, Czech Republic
  \email{stumpf@kam.mff.cuni.cz}
}
\authorrunning{S.~D.~Fink, M.~Pfretzschner, and P.~Stumpf}
\begin{document}
\maketitle

\begin{abstract}
  In the Segment Intersection Graph Representation Problem, we want to represent the vertices of a graph as straight line segments in the plane such that two segments cross if and only if there is an edge between the corresponding vertices.
  This problem is NP-hard (even $\exists\mathbb{R}$-complete \cite{sch-cos-10}) in the general case \cite{km-igo-94} and remains so if we restrict the segments to be axis-aligned, i.e., horizontal and vertical \cite{kra-asp-94}.
  A long standing open question for the latter variant is its complexity when the order of segments along one axis (say the vertical order of horizontal segments) is already given \cite{nf-ppa-92,kra-asp-94}.

  We resolve this question by giving efficient solutions using two very different approaches that are interesting on their own.
  First, using a graph-drawing perspective, we relate the problem to a variant of the well-known Level Planarity problem, where vertices have to lie on pre-assigned horizontal levels.
  In our case, each level also carries consecutivity constraints on its vertices; this Level Planarity variant is known to have a quadratic solution.

  Second, we use an entirely combinatorial approach and show that both problems can equivalently be formulated as a linear ordering problem subject to certain consecutivity constraints.
  While the complexity of such problems varies greatly, we show that in this case the constraints are well-structured in a way that allows a direct quadratic solution.
  Thus, we obtain three different-but-equivalent perspectives on this problem: the initial geometric one, one from planar graph drawing and a purely combinatorial one.

  \keywords{segment intersection representation, grid intersection graph, level planarity}
\end{abstract}

\section{Introduction}
In a seminal work from 1989, Kratochv{\'\i}l and Matou{\v s}ek initiated the systematic study of intersection graphs of segments forming the class \textsc{SEG} \cite{km-igo-94}.
A graph is in \textsc{SEG} if its vertices can be represented through straight line segments in the plane that intersect if and only if the corresponding vertices share an edge.
They not only showed that testing membership in \textsc{SEG} is NP-hard, but also through showing that a representation of a graph in \textsc{SEG} might require coordinates with exponential precision \cite{km-igo-94}, laid the foundation for the study of a new complexity class denoted as $\exists\mathbb{R}$ \cite{sch-cos-10}.
They studied further variants of \textsc{SEG} (see Fig.~1 of \cite{km-igo-94}), in one direction further allowing the segments to have a bounded number of bends \cite{kra-sgi-91,km-nhr-89}, and in the other direction restricting the segments to a bounded number of slopes \cite{kra-asp-94}.
In the latter setting, even the case where only two directions, say horizontal and vertical, are allowed (in their work denoted as \textsc{2-DIR}, or \textsc{PURE-2-DIR} also forbidding overlaps) turned out NP-complete \cite{kra-asp-94}.

Kratochv{\'\i}l and Neŝetril observed that further prescribing the orders of horizontal and vertical segments, separately along both axes, finally turned the problem tractable \cite{kra-asp-94,nf-ppa-92}.
They did so through two different equivalent reformulations of the problem.
First note that, in the case without overlaps, we are given a bipartite graph $G=(H\cup V, E)$ together with orders $\sigma_H$ and $\sigma_V$ of $H$ and $V$, respectively; see \Cref{fig:example} for an example instance.
In their first formulation, they used a 2-layer drawing of $G$ with $H$ and $V$ respectively on two parallel lines, using $\sigma_H$ as order on the one layer and $\sigma_V$ on the other; see \Cref{fig:volkswagen} and \cite[Fig.~19]{kra-asp-94}.
They observed that an instance is positive if and only if it avoided a certain configuration of edges in this drawing, which they called ``volkswagen''.
Alternatively, following an approach of Hartman, Newman and Ziv \cite{hnz-ogi-91}, the yes-instances can also be classified via their adjacency matrix, which shall not contain a configuration called ``cross'', see \Cref{fig:matrix-cross} and \cite[Proposition 3.1]{hnz-ogi-91}.

\begin{figure}[t]
  \begin{subfigure}{.15\textwidth}
    \includegraphics[scale=0.8,page=2,trim={20pt 0 0 0},clip]{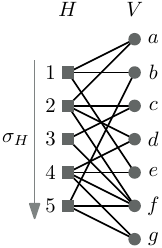}
    \subcaption{}
    \label{fig:volkswagen}
  \end{subfigure}\begin{subfigure}{.48\textwidth}
    \centering
    \includegraphics[scale=0.8,page=5]{segments}
    \subcaption{}
    \label{fig:segments-alpha}
  \end{subfigure}\begin{subfigure}{.37\textwidth}
    \includegraphics[scale=0.8,page=6]{segments}
    \subcaption{}
    \label{fig:matrix-cross}
  \end{subfigure}
  \caption{
    \textbf{(a)} A 2-layer drawing of a bipartite graph $G=(H\cup V, E)$.
      Highlighted in red is one ``volkswagen'' configuration present due to the orders used for $H$ and $V$ as well as the missing $3d$ edge.
    \textbf{(b)} An invalid HV-segment representation using the same vertex orders, with unwanted intersections highlighted in red.
    \textbf{(c)} The adjacency matrix of $G$, with an unwanted ``cross'' configuration centered on the absent $3d$ edge highlighted in orange. Empty cells correspond to a value of 0.
  }
  \label{fig:example}
\end{figure}
  
With the complexity of recognizing segment intersection graphs thus rather well established, Kratochv{\'\i}l and Neŝetril posed the following open problem \cite{kra-asp-94,nf-ppa-92}:
what if only one of the two orders (e.g. the vertical order of horizontal segments) is fixed?
In the above equivalent formulations, this is equivalent to either allowing to permute the vertices along one of the two parallel lines, or to permute the matrix along one axis (e.g. permuting its columns).
Formally, this can be stated as follows; see also \Cref{fig:reduction}.
\myproblem{\hvprob}{
  A bipartite graph $G = (H \cup V, E)$ and a linear order $\sigma_H$ of $H$.
}{
  Is $G$ the intersection graph of horizontal segments $H$ and vertical segments $V$, where the horizontal segments have the fixed vertical order~$\sigma_H$?
}

In this work, we will finally resolve this question after roughly 30 years, using two further entirely different, yet equivalent perspectives on the problem.
Akin to the visual ``volkswagen'' classification, our first perspective will be formulated in terms of Graph Drawing using a problem called \textsc{Level Planarity}.
In \Cref{sec:hv}, we will show that \hvprob (or \hvp for short) can be reduced to a variant of \textsc{Level Planarity}, thus also allowing us to use its known quadratic-time solution.
In \Cref{sec:hv-to-level} we will conversely show that we can also reduce the problems in the other direction, showing their equivalence up to (in)efficiencies in representing problem instances.
We will turn to our second perspective in \Cref{sec:seqpq}, where we give an entirely combinatorial formulation of \hvp.
We again show equivalence for this formulation and also show it can directly be solved in quadratic time without the need for further reductions.

\section{Semi-Fixed HV-Segment Intersection Graphs}\label{sec:hv}

\begin{figure}[t]
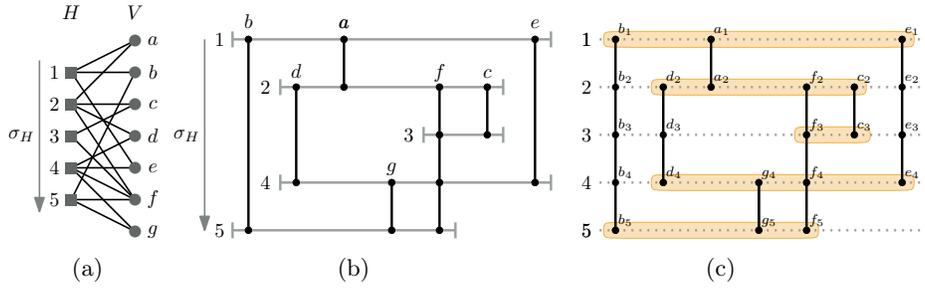

  \hspace{-.02\textwidth}
  \begin{subfigure}{.18\textwidth}
    \includegraphics[scale=0.75,page=1]{segments}
    \subcaption{}
    \label{fig:volkswagen2}
  \end{subfigure}\begin{subfigure}{.4\textwidth}
    \centering
    \includegraphics[scale=0.75,page=3]{segments}
    \subcaption{}
    \label{fig:segments-fixed}
  \end{subfigure}\begin{subfigure}{.4\textwidth}
    \includegraphics[scale=0.75,page=4]{segments}
    \subcaption{}
    \label{fig:tlp-inst}
  \end{subfigure}
  \caption{
    \textbf{(a)} The \hvp instance from \Cref{fig:volkswagen}, only fixing the order $\sigma_H$.
    \textbf{(b)} A corresponding HV-segment representation with a different order of $V$ than in \Cref{fig:segments-alpha}, showing that it is a yes-instance.
    \textbf{(c)} An equivalent instance of \tlp. The consecutive sets of every level are highlighted in orange.
  }
  \label{fig:reduction}
\end{figure}

In this section, we show that \hvp can be solved in quartic time using a reduction to some variant of the \textsc{Level Planarity} problem.
For \textsc{Level Planarity}, we are given a graph together with prescribed y-coordinates for its vertices (also called \emph{levels}), and seek a crossing-free drawing that respects these y-coordinates and also has all edges y-monotone.
This problem has been thoroughly researched in the field of Graph Drawing, e.g. in terms of efficient and also simple solutions \cite{jlm-lpt-98,jl-lpe-02,fpss-htm-13,brs-lpt-18,brs-lpt-22} and also its relationship to other planar graph drawing problems \cite{sch-tat-13,albfr-ibpcl-15}.
We call the tuple $(G,\ell)$ consisting of a graph $G=(V,E)$ and a \emph{leveling function} $\ell : V \to \mathbb{N}$ a \emph{level graph}.
A level graph (and especially its leveling function) is called \emph{proper} if all its edges have their endpoints on adjacent levels, i.e., if we have $\ell(v) = \ell(u) + 1$ for all $uv\in E$.
Similarly, a path $v_1, v_2, \dots, v_k$ is \emph{level-monotone} if $\ell(v_{i+1}) = \ell(v_i) + 1$ for all $i \in [k-1]$ with $[k-1]\coloneq\{0,\ldots,k-1\}$.

In the context of our work, we will use a further constrained variant that also allows us to require sets of vertices on the same layer to be \emph{consecutive}, i.e., ordered such that no other vertices of that level lie between them.
For now, we require the sets of consecutive vertices to form a laminar family, i.e., any two sets are either disjoint or one contains the other.
This problem is then known to have a quadratic-time solution if the instance is proper, while it is NP-complete on non-proper instances \cite{albfr-ibpcl-15,albfr-ibpcl-14}.
Formally, the former, tractable case of the problem is defined as follows.

\myproblem{\tlp}{
  A graph $G=(V,E)$, a proper leveling function $\ell : V \to \mathbb{N}$, and for each level $i$ a laminar family $T_i$ of consecutivity constraints on $\ell^{-1}(i)$\footnote{Using the containment-relationship of laminar families allows us to equivalently represent $T_i$ as a tree of linear size, thus obtaining the original formulation of the problem \cite{albfr-ibpcl-15}.}.
}{
  Is there a planar drawing of $G$ where each vertex $v\in V$ has y-coordinate $\ell(v)$,
  all edges are drawn y-monotone, and
  the horizontal order of vertices on each level $i$ satisfies the consecutivity constraints of~$T_i$?
}

To now reduce \hvp to this problem, we create a level for each horizontal segment along the given order; see \Cref{fig:reduction}.
For each vertical segment, we create an edge that spans from the first to the last level of horizontal segments that shall be intersected.
We subdivide the edges to make the instance proper.
The crucial insight to \hvp is now that all vertical segments $\mathcal I(i)$ that intersect a given horizontal segment $h_i$ need to be consecutive on the horizontal line corresponding to $h_i$.
This means that any other vertical segment that spans, but does not intersect $h_i$, needs to be placed before or after~$\mathcal I(i)$.
This can easily be modeled using the per-level consecutivity constraints of \tlp.

\begin{theorem}
  \label{thm:hv-to-rtlp}
  There exists a quadratic-time reduction from \hvp to \tlp where the input graph is the disjoint union of level-monotone paths.
\end{theorem}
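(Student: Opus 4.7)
My plan is to formalize the reduction sketched in the excerpt, then verify the quadratic running time and the equivalence of yes-instances in both directions.

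First, I would set up the construction. Index $H = \{h_1, \dots, h_m\}$ along $\sigma_H$, and for each $v \in V$ let $a_v = \min\{i : vh_i \in E\}$ and $b_v = \max\{i : vh_i \in E\}$. Introduce a level-monotone path $P_v = v^{(a_v)} v^{(a_v+1)} \cdots v^{(b_v)}$ with $\ell(v^{(i)}) = i$, so that the output level graph is the disjoint union of these paths, and $\ell$ is proper by construction. For each level $i$, add the single consecutivity constraint $\mathcal{I}(i) = \{v^{(i)} : vh_i \in E\}$ to $T_i$; this one-set family is trivially laminar. Since $\sum_v (b_v - a_v + 1) \le |V| \cdot |H|$ and $\sum_i |\mathcal{I}(i)| = |E|$, the whole construction runs in time quadratic in the input size.

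Next, I would verify the forward direction. Given any valid HV-representation for the \hvp instance, first perturb so that all vertical segments have distinct $x$-coordinates $x_v$, preserving all intersections. Drawing each $v^{(i)}$ at the point $(x_v, i)$ realises every $P_v$ as a vertical straight line; two such lines never cross, so the drawing is planar and all edges are $y$-monotone. At level $i$, $\mathcal{I}(i)$ consists exactly of those $v$ whose $x_v$ lies in the $x$-interval of $h_i$, and is therefore automatically consecutive in the $x_v$-order restricted to segments spanning level $i$.

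The backward direction is where I expect the main work to lie. Given a \tlp solution, I need global $x$-coordinates $x_v$ for every $v \in V$ that simultaneously agree with the per-level orders at every level through which $P_v$ passes. The plan is to define a relation $v \prec v'$ whenever $P_v$ and $P_{v'}$ share some level $i$ and $v^{(i)}$ lies left of $v'^{(i)}$ in the drawing. The crucial claim is that $\prec$ is well-defined and acyclic: if two paths shared two common levels with opposite relative orders, $y$-monotonicity of edges together with planarity would force a crossing, a contradiction; any induced cycle would, by following the relevant subpaths, produce the same local obstruction. A topological extension to a total order then yields suitable coordinates $x_v$. I would place each $v$ as the vertical segment from $(x_v, a_v)$ to $(x_v, b_v)$; at each level $i$ the consecutivity constraint guarantees that $\mathcal{I}(i)$ is a contiguous block in the $x_v$-order among all segments spanning level $i$, so I can choose the horizontal extent of $h_i$ to cover precisely the $x_v$-coordinates of $\mathcal{I}(i)$ and nothing else. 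Segments not spanning level $i$ have $y$-ranges disjoint from $\{i\}$ and hence cause no spurious intersections, completing the equivalence.
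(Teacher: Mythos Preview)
Your construction and both directions of the equivalence match the paper's proof essentially one-to-one: same one-constraint-per-level setup, same quadratic size bound, and the same idea of reading off vertical segments from straightened paths and horizontal segments from the consecutive blocks. The only difference is that for the \tlp$\to$\hvp direction the paper simply asserts that, because $G$ is a disjoint union of level-monotone paths drawn planarly, one can redraw each path as a straight vertical line without changing any per-level order; you instead build the left-of relation $\prec$ and topologically sort it. Your route is more explicit, but the acyclicity step (``any induced cycle would, by following the relevant subpaths, produce the same local obstruction'') is the one place that is still a sketch rather than a proof---two paths in a minimal $\prec$-cycle need not share a level, so the pairwise argument does not immediately apply. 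This is fixable (e.g.\ by an induction over levels as in the paper's later \Cref{lem:pqprobCharacterization}, or by the straightening observation the paper uses), but as written it is the weakest link in an otherwise complete plan.
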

\begin{proof}
  Let $((H \cup V, E), \sigma_H)$ be an instance of \hvp.
  For $k \coloneqq |H|$, let $\ell: H \to [k]$ be the bijection that maps every horizontal segment in $H$ to its corresponding position in the linear order defined by $\sigma_H$.
  We call $\ell(h)$ the \emph{level} of $h \in H$.
   
  For every vertical segment $v \in V$, we let the interval $L(v) = [\min\{\ell(h) : vh \in E\}, \max\{\ell(h) : vh \in E\}] \subseteq [k]$ denote the \emph{lifespan} of $v$, that is the interval spanned by the lowest and highest levels of vertices adjacent to $v$.
  For every $i \in [k]$ and its corresponding horizontal segment $h = \ell^{-1}(i)$, the set $\mathcal A(i) \coloneqq \{v \in V : i \in L(v)\}$ denotes the \emph{active segments} of level $i$ and the set $\mathcal I(i) = \{v \in V : vh \in E\} \subseteq \mathcal A(i)$ describes the \emph{intersecting segments} of level $i$.
  
  We construct an equivalent \tlp instance $(G, \gamma,$ $(T_1, \dots, T_k))$ with $k$ levels as follows.
  The horizontal segments $H$ correspond bijectively to the levels of $(G, \gamma)$ as defined by $\ell$.
  Every vertical segment $v \in V$ is represented by a level-monotone path in $G=(V,E)$ that contains a vertex on every level that $v$ is active in, i.e., $V \coloneqq \{v_i : i \in [k], v \in \mathcal A(i)\}$ with $\gamma(v_i) \coloneqq i$ and $E \coloneqq \{v_iv_{i+1} : i \in [k-1], v \in \mathcal A(i) \cap \mathcal A(i+1)\}$.
  Finally, on every level $i \in [k]$, we constrain the vertices corresponding to the intersecting segments $\mathcal I(i)$ to be consecutive on that level, that is, we set $T_i \coloneqq \{\{v_i : v \in \mathcal I(i)\}\}$.
  In particular, observe that we obtain only a single consecutivity constraint per level.
  
  Note that the resulting level graph $(G, \gamma)$ is proper and, in the worst case, the number of vertices in $G$ is quadratic in $|H \cup V|$.
  We show that $((H \cup V, E), \sigma_H)$ is a yes-instance of \hvp if and only if $(G, \gamma, (T_1, \dots, T_k))$ is a yes-instance of \tlp.
  
  First assume that $(G, \gamma, (T_1, \dots, T_k))$ is a yes-instance of \tlp and let $\Gamma$ be a corresponding drawing of $G$.
  Since $G$ is the disjoint union of monotone paths and $\Gamma$ is planar, we can alter $\Gamma$ such that each path is represented by a straight vertical line, without changing the order of vertices on any level.
  We obtain an HV-segment representation of $((H \cup V, E), \sigma_H)$ by replacing each of these paths with a vertical segment, and subsequently adding on every level $i$ a horizontal segment that intersects the vertices of $T_i$.
  Since these vertices are consecutive on level $i$ in $\Gamma$, the horizontal segment intersects precisely the vertical segments it is adjacent to in $(H \cup V, E)$.
  
  Conversely, assume that $((H \cup V, E), \sigma_H)$ is a yes-instance of \hvp and assume that a corresponding HV-segment representation $\Delta$ is given.
  We obtain a level-planar drawing $\Gamma$ of $(G, \gamma)$ by simply replacing every vertical segment with a path that contains a vertex on every level where it is active.
  Now assume that, for some level $i \in [k]$, the single consecutivity constraint of $T_i$ is not satisfied in $\Gamma$.
  This implies that the intersecting segments $\mathcal I(i)$ of level~$i$ are not consecutive with respect to all active segments $\mathcal A(i)$ of level $i$ in $\Delta$.
  Therefore, a vertical segment $v \in \mathcal A(i) \setminus \mathcal I(i)$ lies between two vertical segments of $\mathcal I(i)$ on level~$i$.
  Since the horizontal segment $\ell^{-1}(i)$ intersects exactly the segments in $\mathcal I(i)$ and $v \notin \mathcal I(i)$, $i$ cannot be part of the lifespan of $v$.
  But this is a contradiction, since~$v \in \mathcal A(i)$.
  Therefore, $(G, \gamma, (T_1, \dots, T_k))$ is a yes-instance of \tlp.
\end{proof}

Since \tlp admits a quadratic-time algorithm \cite{albfr-ibpcl-14,albfr-ibpcl-15}, we can thereby solve the initial instance in quartic time overall.

\begin{corollary}\label{cor:hvp-quartic}
  \hvp can be solved in $O(n^4)$ time.
\end{corollary}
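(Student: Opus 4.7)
The plan is to directly chain \Cref{thm:hv-to-rtlp} with the known quadratic-time algorithm for \tlp of \cite{albfr-ibpcl-14,albfr-ibpcl-15}. Given an \hvp instance of size $n = |H \cup V|$, I would first invoke the reduction from \Cref{thm:hv-to-rtlp}, which itself runs in $O(n^2)$ time and produces an equivalent \tlp instance; then I would feed the latter to the cited algorithm. The only bookkeeping required is to verify that the resulting \tlp instance has total description size $O(n^2)$, so that running a quadratic-time algorithm on it costs $O((n^2)^2) = O(n^4)$ overall.

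This size bound is immediate from the construction inside the proof of \Cref{thm:hv-to-rtlp}: the vertex set of $G$ contains exactly one vertex per pair $(i, v)$ with $i \in [|H|]$ and $v \in \mathcal A(i)$, of which there are at most $|H| \cdot |V| = O(n^2)$; since $G$ is a disjoint union of level-monotone paths, its edge count is of the same order; and each of the at most $|H|$ levels carries a single consecutivity constraint of size at most $|V|$, contributing $O(n^2)$ in total across all levels. I do not foresee any real obstacle here — once the quadratic blow-up of the reduced instance is observed, the remainder is just substituting $N = O(n^2)$ into the $O(N^2)$ runtime of the \tlp algorithm. Sharpening the result below $O(n^4)$ would require opening up this black-box combination and either compressing the reduction or exploiting the special structure (disjoint union of level-monotone paths, exactly one constraint per level) inside the \tlp algorithm, but for the stated bound such an analysis is unnecessary.
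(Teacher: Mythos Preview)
Your proposal is correct and matches the paper's own argument exactly: the paper simply observes, in a single sentence preceding the corollary, that chaining the quadratic-time reduction of \Cref{thm:hv-to-rtlp} with the quadratic-time algorithm for \tlp from \cite{albfr-ibpcl-14,albfr-ibpcl-15} yields an overall quartic bound. Your explicit verification of the $O(n^2)$ size of the reduced instance is more detailed than what the paper writes, but it is the same reasoning.
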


\section{Relation to Level-Planarity Variants}\label{sec:hv-to-level}
In this section, we show that \hvp is quadratic-time equivalent to \tlp.
Since \hvp reduces to \tlp in quadratic time by \Cref{thm:hv-to-rtlp}, it remains to show the other direction.
To this end, we first show that every instance of \tlp can be reduced in linear time to an equivalent instance where the graph is a matching.
The edges of this matching then correspond to the vertical segments of the equivalent \hvp instance, while each consecutivity constraint results in a horizontal segment.
Since each level of the \tlp instance may have a linear number of consecutivity constraints, this yields a quadratic-time reduction.

\begin{lemma}
  \label{lem:level-monotone}
  Given an instance of \tlp, we can find in linear time an equivalent instance where the graph is a matching.
\end{lemma}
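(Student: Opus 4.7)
My plan is to replace each vertex $v$ of $G$ with $\deg(v)$ copies, one per incident edge, so that each original edge becomes an isolated matching edge of the new graph $G'$. Formally, from the input instance $(G=(V,E), \ell, (T_1, \dots, T_k))$ I set $V' = \{v^e : v \in V, e \ni v\}$, $E' = \{u^e v^e : e = uv \in E\}$, and $\ell'(v^e) = \ell(v)$. The old consecutivity constraints are translated to the copies: on every level $i$, for each $v \in \ell^{-1}(i)$ I add $C_v \coloneq \{v^e : e \ni v\}$ as a new consecutivity constraint, and I replace every $T \in T_i$ by $T' \coloneq \bigcup_{v \in T} C_v$. The resulting family remains laminar: the sets $C_v$ are pairwise disjoint, each $C_v$ is fully contained in or fully disjoint from every $T'$, and the $T'$ inherit the laminar relations of the original $T$.

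For the equivalence, the forward direction is straightforward: given a level-planar drawing of the original instance that respects $T_i$, each vertex $v$ is replaced by its copies placed in an arbitrarily small neighborhood of $v$ on level $\ell(v)$. The up-copies are ordered left-to-right to match the up-neighbors of $v$ on level $\ell(v)+1$, the down-copies are ordered to match the down-neighbors, and the two blocks are interleaved arbitrarily. Since the copies occupy an arbitrarily small interval, all $C_v$ and all translated $T'$ are satisfied, and choosing the neighborhood small enough preserves planarity.

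The main obstacle is the reverse direction: starting from a level-planar drawing of $G'$ that respects the new constraints, I would contract each $C_v$ back to a single point placed inside the horizontal span of its (consecutive) copies. Two things need verification. First, the edges leaving the contracted point form a non-crossing fan: up-edges and down-edges live in the disjoint open half-planes above and below level $\ell(v)$, and a fan of $y$-monotone edges from a single point to distinct endpoints on the next level is planar. Second, external edges (those not incident to $C_v$) are undisturbed: by consecutivity of $C_v$ no external vertex sits in the horizontal span of $C_v$ on level $\ell(v)$, and a standard crossing argument shows that the level-$\ell(v)\!\pm\!1$ endpoints of external edges lie outside the span of the up- (resp.\ down-)neighbors of $C_v$ — otherwise the $y$-monotone external edge and one of the edges of $C_v$ would already have crossed in the matching drawing. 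Hence moving the $C_v$-endpoints to the contracted point creates no new crossings, and contracting all $C_v$ simultaneously yields a level-planar drawing of the original instance that respects every $T \in T_i$.

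Finally, $|V'|+|E'| \in O(|E|)$ and the new per-level constraint families have total size $O(|V| + \sum_i |T_i|)$, so the whole reduction can be carried out in linear time by iterating once over vertices, edges, and the laminar families.
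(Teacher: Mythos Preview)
Your proposal is correct and follows essentially the same approach as the paper: both ``explode'' each vertex $v$ into $\deg(v)$ degree-1 copies on the same level, add a new consecutivity constraint $C_v$ keeping those copies together, and push the old constraints down to the copies. The paper's proof is terser---it simply asserts that ``a corresponding drawing of the original graph can be obtained by simply merging the new degree-1 vertices back into a single vertex''---whereas you spell out the crossing argument that justifies this contraction; your added detail is sound and arguably fills a gap the paper leaves implicit. One minor quibble: your size bound for the new constraint families should include an $O(|E|)$ term (the new trees have $2|E|$ leaves in total), but the reduction is still linear in the input size.
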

\begin{proof}
  We assume without loss of generality that the input graph contains no isolated vertices.
  Given an instance of \tlp, first ``explode'' every vertex~$v$, i.e., replace $v$ with $\deg(v)$ degree-1 vertices on the same level, adjacent to the original neighbors of $v$.
  Subsequently, add an additional consecutivity constraint to the level of $v$ that ensures that the new degree-1 vertices that stem from $v$ are consecutive.
  Due to this consecutivity, a corresponding drawing of the original graph can be obtained by simply merging the new degree-1 vertices back into a single vertex.
  Conversely, to obtain a drawing of the new instance, perform the explosions for each vertex and order the new degree-1 vertices on the same level according to the order of edges around the original vertex.
  
  Observe that the resulting family of consecutivity constraints is still laminar and since every vertex now has degree 1, the graph is a matching.
  Moreover, the number of new vertices and the combined size of the new consecutivity constraints each correspond to the sum of vertex-degrees in the original instance.
  Therefore, the reduction is linear.
\end{proof}

\begin{lemma}
  \label{lem:tlp-hv}
  There exists a quadratic-time reduction from \tlp to \hvp.
\end{lemma}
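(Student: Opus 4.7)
By \Cref{lem:level-monotone}, I may assume the input \tlp instance $(G, \ell, T_1, \ldots, T_k)$ has $G$ a matching, so it remains to reduce this matching-\tlp case to \hvp in quadratic time. My construction uses one vertical segment $s_{uv}$ per matching edge $uv$ (with $\ell(u)=i$, $\ell(v)=i+1$), one horizontal segment $h_C$ per consecutivity constraint $C$, and two additional horizontal ``level markers'' $h_i^-, h_i^+$ per level $i$. The fixed order $\sigma_{H'}$ concatenates, for each level $i$ in order, the block consisting of $h_i^-$, followed by the horizontal segments for all $C \in T_i$ in an arbitrary order, followed by $h_i^+$. The bipartite adjacency is defined so that $s_{uv}$ intersects exactly the four markers $h_i^-, h_i^+, h_{i+1}^-, h_{i+1}^+$, every $h_C$ with $C \in T_i$ and $u \in C$, and every $h_{C'}$ with $C' \in T_{i+1}$ and $v \in C'$.

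\textbf{Equivalence sketch.} For the forward direction, I start from a level-planar drawing of the matching, straighten every edge to a vertical line as in the proof of \Cref{thm:hv-to-rtlp}, and construct the HV-representation by drawing each $s_{uv}$ along this line with $y$-range from $h_i^-$ to $h_{i+1}^+$, each level marker as a wide horizontal segment, and each $h_C$ at a $y$-coordinate between $h_i^-$ and $h_i^+$ with $x$-range exactly covering the $x$-coordinates of the vertices in $C$ --- which is possible precisely because $C$ is consecutive on level $i$. For the converse, the $x$-order of verticals active at $h_i^-$ (equivalently at $h_i^+$) gives an ordering of level $i$; consecutivity of each $C$ is then forced because $h_C$ meets exactly the $C$-relevant verticals. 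The total size of the resulting \hvp instance is $O(\sum_i \sum_{C \in T_i} |C|)$, which is $O(n^2)$ in the worst case (e.g., a chain-shaped laminar family on a single level), matching the claimed time bound.

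\textbf{Main obstacle.} The crucial subtlety is that, once reduced to a matching, a consecutivity constraint $C$ on level $i$ can contain both vertices whose matching edge goes upward (to level $i+1$) and vertices whose edge goes downward (to level $i-1$). A naive construction with only one level marker $h_i$ and constraint segments placed just above it would be reachable only by the up-going verticals, missing the down-going ones; placing analogous segments on both sides of $h_i$ would enforce up- and down-consecutivity independently but fail to enforce \emph{joint} consecutivity of a mixed constraint set. The double-marker scheme resolves this: the $y$-band between $h_i^-$ and $h_i^+$ is one in which every vertical segment incident to level $i$, regardless of direction, remains active, so a single $h_C$ placed inside this band enforces consecutivity of $C$ uniformly across both orientations.
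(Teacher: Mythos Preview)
Your proposal is correct but takes a genuinely different route from the paper. Both arguments start by invoking \Cref{lem:level-monotone} to reduce to a matching, and then diverge. The paper first converts the matching instance into one with a \emph{single} constraint per level: each level $i$ is split into $|T_i|$ sub-levels and each vertex is stretched into a path across them, so the graph becomes a disjoint union of level-monotone paths with one constraint per level; at that point the reduction to \hvp is literally the inverse of the construction in \Cref{thm:hv-to-rtlp}, and the equivalence proof is reused verbatim. You instead keep the matching intact, use matching edges as the vertical segments, and handle the mixed-orientation obstacle you identify by sandwiching every level between two marker horizontals, so that the entire $[h_i^-,h_i^+]$ band is one in which all level-$i$ edges---up-going and down-going alike---are active.

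What each approach buys: the paper's level-splitting pays a preprocessing cost to land exactly on the instance shape produced by \Cref{thm:hv-to-rtlp}, so no new correctness argument is needed. Your construction is more direct and makes the key obstruction explicit, at the price of needing its own (short) equivalence argument. One point you should spell out in a full write-up is why the order of the $h_C$ inside a level's block may indeed be arbitrary: it holds because every level-$i$ vertical spans the whole band $[h_i^-,h_i^+]$, so the set of active verticals and their $x$-order are identical at every height in that band, making the placement of each $h_C$ within it immaterial.
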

\begin{proof}
Due to \Cref{lem:level-monotone}, we can assume that the input graph is a matching.
  We first show that we can compute an equivalent instance of \tlp with only one consecutivity constraint per level.
  For each level $i$, let $T_i$ denote the corresponding laminar family with $|T_i|$ consecutivity constraints.
  We split $i$ into $|T_i|$ levels $i^1, \dots, i^{|T_i|}$ and redistribute its constraints as follows.
  Replace each vertex $v$ on $\ell$ by a path of $|T_i|$ vertices $v^1, \dots v^{|T_i|}$ on levels $i^1, \dots, i^{|T_i|}$, respectively.
  Since the input graph is a matching, $v$ has exactly one neighbor.
If this neighbor lies on a lower level, we assign the corresponding edge to $v^1$, otherwise we assign it to $v^{|T_i|}$.
  Observe that due to these parallel paths, the vertices on all $|T_i|$ levels have the same order in a level planar drawing.
  This allows us to distribute $T_i$ over the levels $i^1, \dots, i^{|T_i|}$, leaving one constraint per level.
  As each level $i$ of the input instance may have a linear number of constraints, this yields a quadratic-time reduction.
  Observe that the resulting instance is the disjoint union of level-monotone paths and each level has exactly one constraint.
  
  Now consider an instance $I = (G, \ell, (T_1, \dots, T_k))$ of \tlp where $G$ consists of disjoint level-monotone paths and that has exactly one consecutivity constraint per level, i.e., $|T_i| = 1$ for all $i\in[k]$.
  We construct an equivalent instance $I' = ((H \cup V, E), \sigma_H)$ as follows.
  The (vertices represented by) vertical segments $V$ correspond bijectively to the level-monotone paths of $G$, while the (vertices represented by) horizontal segments $H$ correspond bijectively to the levels of $I$.
  The order $\sigma_H$ orders the horizontal segments according to the order of their corresponding levels.
For each level $i$, the horizontal segment corresponding to level $i$ is adjacent to the vertical segments of $V$ whose corresponding paths in~$I$ have a vertex in $T_i$; see \Cref{fig:reduction}.
  
  Note that, as in the proof of \Cref{thm:hv-to-rtlp}, the level-monotone paths of $I$ correspond bijectively to the vertical segments in $I'$, while the one-per-level consecutivity constraints correspond to the horizontal segments.
  The equivalence of the two instances can thus be shown analogously as done in the proof of \Cref{thm:hv-to-rtlp}.
\end{proof}

This now allows us to reduce between \hvp and \tlp with a quadratic factor that is mostly due to differences in how efficient the two problems represent their input instances.
That is, the latter one more efficiently represents multiple consecutivity constraints per level than the former, using the representation of a laminar family of sets as tree.
In contrast, it also on each level needs to make explicit those elements that shall lie outside of the consecutive subset, which is implicit in the former representation.

\section{Relation to Constrained Ordering Problems}\label{sec:seqpq}
The reduction in \Cref{thm:hv-to-rtlp} already hints that \hvp is essentially a constrained ordering problem: we seek a linear order of the vertical segments, such that, on each of the fixed horizontal layers, the active segments satisfy a prescribed consecutivity constraint.
Finding a linear order of a ground set $X$ subject to some constraints that require a subset $A$ to appear consecutively, that is, uninterrupted with regards to some other set $B$, is indeed a common combinatorial problem.
This problem is in general NP-complete for arbitrary $A$ and $B$, due to a straight-forward reduction from the well-known \textsc{Betweenness} problem \cite{opa-top-79}.
In contrast it is, e.g., linear-time solvable if we restrict $B$ to be the complement of $A$ with regards to $X$ (using solely the PQ-tree data structure described further below \cite{bl-tcopi-76}).
In this section, we want to show that \hvp and \tlp can equivalently be seen as a relaxed variant of this latter setting, where all given $(A,B)$ consecutivity constraints can be ordered such that every element of $X$ appears in a consecutive subsequence of constraints.

\begin{figure}[t]
  \begin{subfigure}{.45\textwidth}
    \centering
    \includegraphics[scale=1,page=1]{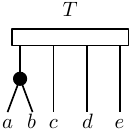}
    \subcaption{}
    \label{fig:pqtree}
  \end{subfigure}
  \begin{subfigure}{.55\textwidth}
    \centering
    \includegraphics[scale=1,page=2]{laminarPQTrees}
    \subcaption{}
    \label{fig:laminar}
  \end{subfigure}
  \caption{
    \textbf{(a)} A PQ-tree $T$ with $L(T) = \{a,b,c,d,e\}$ and consecutivity constraints $\{a,b\}, \{a,b,c\}, \{c, d\}$, and $\{d, e\}$. P-nodes and C-nodes are represented by black circles and white rectangles, respectively.
    \textbf{(b)} Two PQ-trees $T^1$ and $T^2$ with the laminar consecutivity constraints $\{a,b\}, \{a,b,c\}, \{d,e\}$ and $\{c, d\}$, respectively.
  }
  \label{fig:pq-trees}
\end{figure}

Recall that a laminar family of consecutivity constraints can succinctly be represented as a tree with the constrained elements as leaves.
If we now drop the restriction that the constraints form a laminar family, i.e., allow them to have arbitrary intersections, we can still represent them efficiently using a data structure called PQ-tree~\cite{bl-tcopi-76}.
A \emph{PQ-tree} $T$ is a rooted tree that represents linear orders of its leaves $L(T)$, subject to a set of constraints that require certain leaves to be consecutive; see \Cref{fig:pqtree} for an example.
Each of its inner nodes is either a \emph{P-node} or a \emph{Q-node}, where the former allow their children to be permuted arbitrarily, while the latter fix the order of their children up to reversal.
The linear orders of $L(T)$ that can be obtained in this way are the orders that are \emph{represented} by $T$.
Conversely, we say that a linear order $\sigma$ of a set $X \supseteq L(T)$
\emph{satisfies} $T$ if $\sigma$ restricted to $L(T)$ is represented by $T$.
For a given (not necessarily laminar) family of consecutivity constraints on a ground set~$X$, the PQ-tree $T$ with $L(T)=X$ that represents exactly those orders that satisfy all constraints can be found in linear time~\cite{bl-tcopi-76}.
If there exists no order of $L(T)$ that satisfies all consecutivity constraints, then $T$ is the \emph{null tree}.
We now use this data structure to formulate the following constrained ordering problem and, in the later parts of this section, show it to be the combinatorial analogue of the problems from the previous sections.

\myproblem{\pqprob}{
  A ground set $X$ and a sequence $T_1, \dots, T_k$ of PQ-trees with $X = {L(T_1) \cup \dots \cup L(T_k)}$ such that, for every $x \in X$, the PQ-trees that contain $x$ are consecutive in the sequence $T_1, \dots, T_k$.
}{
  Is there a linear order of $X$ that satisfies all PQ-trees $T_1, \dots, T_k$?
}

We remark that, when we drop the requirement that each element of the ground set must appear in a consecutive subset of the PQ-tree sequence, the problem becomes NP-complete.\footnote{In a sense, this is analogous to dropping the ``proper'' part of \tlp, where the then-allowed long edges do not appear in the trees of levels they span.}
This can again easily be proven via reduction from the \textsc{Betweenness} problem by representing each betweenness constraint individually by a PQ-tree consisting of a single Q-node.
However, the following lemma shows that thanks to the additional requirement in \pqprob, it suffices to ensure that common elements are ordered consistently for PQ-trees that are adjacent in the sequence.

\begin{lemma}
  \label{lem:pqprobCharacterization}
  Let $\mathcal I = (X, (T_1, \dots, T_k))$ be an instance of \pqprob.
  Then $\mathcal I$ is a yes-instance if and only if there exist linear orders $\sigma_1, \dots, \sigma_k$ of $L(T_1), \dots, L(T_k)$ such that
  \begin{enumerate}
    \item for every $i \in [k]$, $\sigma_i$ satisfies $T_i$ and
    \item for every $i \in [k-1]$ and every $x, y \in L(T_i) \cap L(T_{i+1})$ it is $x \leq_{\sigma_{i}} y \Leftrightarrow x \leq_{\sigma_{i+1}} y$
  \end{enumerate}
\end{lemma}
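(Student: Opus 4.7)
The plan is to prove both directions. The forward direction is immediate: given a global order $\sigma$ on $X$ satisfying every $T_i$, set $\sigma_i := \sigma|_{L(T_i)}$; condition~(1) holds because restricting a satisfying order to the leaves of a PQ-tree still satisfies it, and condition~(2) is obvious.

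For the backward direction, given per-tree orders $\sigma_1, \dots, \sigma_k$, I will construct a single linear order $\sigma$ on $X$ with $\sigma|_{L(T_i)} = \sigma_i$ for every $i$; condition~(1) then guarantees that $\sigma$ satisfies each $T_i$. I build $\sigma$ inductively, maintaining at step $i$ a linear order $\sigma^{(i)}$ on $L(T_1) \cup \dots \cup L(T_i)$ with $\sigma^{(i)}|_{L(T_j)} = \sigma_j$ for all $j \le i$, and set $\sigma := \sigma^{(k)}$. Start with $\sigma^{(1)} := \sigma_1$. For the inductive step, note that by the consecutivity-of-occurrences hypothesis of \pqprob, any $x \in L(T_i)$ that also lies in some $L(T_j)$ with $j < i$ must lie in $L(T_m)$ for every $j \le m \le i$, in particular in $L(T_{i-1})$. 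Hence the ``new'' elements $N_i := L(T_i) \setminus \bigcup_{j < i} L(T_j)$ coincide with $L(T_i) \setminus L(T_{i-1})$, and they are absent from $\sigma^{(i-1)}$.

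The remaining elements of $L(T_i)$, namely those in $L(T_i) \cap L(T_{i-1})$, already appear in $\sigma^{(i-1)}$ in the order $\sigma_{i-1}|_{L(T_i) \cap L(T_{i-1})}$, which by condition~(2) equals $\sigma_i|_{L(T_i) \cap L(T_{i-1})}$. To form $\sigma^{(i)}$, I insert the elements of $N_i$ into $\sigma^{(i-1)}$ at the positions dictated by $\sigma_i$ relative to these ``old'' elements, preserving the $\sigma_i$-order among the newcomers themselves; any ``dead'' elements of $\sigma^{(i-1)}$ that happen to lie in the same gap can be placed arbitrarily to either side. By construction $\sigma^{(i)}|_{L(T_i)} = \sigma_i$, and for $j < i$ the restriction $\sigma^{(i)}|_{L(T_j)}$ equals $\sigma^{(i-1)}|_{L(T_j)} = \sigma_j$, since $N_i$ is disjoint from $L(T_j)$.

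The main obstacle, and the reason the adjacency-only condition~(2) suffices even though we want a global linear order, is the interplay with the consecutivity-of-occurrences hypothesis of \pqprob. Without it, an element could leave some $L(T_i)$ and re-enter later at an incompatible position, preventing any gluing of the local orders; with it, each element's set of occurrence indices is an interval, so ``old at step $i$'' coincides with ``present at step $i-1$'', and condition~(2) is precisely the compatibility needed to extend the order. Once this structural alignment is recognized, the inductive insertion is routine.
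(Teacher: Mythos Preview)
Your proof is correct and follows essentially the same approach as the paper: both argue the nontrivial direction by induction on~$i$, use the consecutivity-of-occurrences hypothesis to identify $\bigl(\bigcup_{j<i} L(T_j)\bigr)\cap L(T_i)=L(T_{i-1})\cap L(T_i)$, and then merge the inductively built order with~$\sigma_i$ using Property~2 as the compatibility guarantee. The only cosmetic difference is that the paper phrases the merge abstractly (view $\sigma^{(i-1)}\cup\sigma_i$ as an acyclic digraph and take a topological ordering), whereas you describe the same merge concretely as inserting the new elements~$N_i$ into the gaps of~$\sigma^{(i-1)}$.
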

\begin{proof}
  It is clear that both properties hold if $\mathcal I$ is a yes-instance.
  Conversely, assume there exist linear orders $\sigma_1, \dots, \sigma_k$
  of $L(T_1), \dots, L(T_k)$ that satisfy both properties.
We show that $X$ admits a linear order $\sigma$ that extends $\sigma_1, \dots, \sigma_k$ by induction over the number of PQ-trees.
  The statement clearly holds for $k = 1$.
  Assume $k>1$ and that the statement holds for $\mathcal I' = (X', (T_1,\dots,
  T_{k-1}))$ with $X' = {L(T_1) \cup \dots \cup L(T_{k-1})}$, i.e., there is a linear order $\sigma'$ of $X'$ that extends
  $\sigma_1, \dots, \sigma_{k-1}$.
  We construct a linear order $\sigma$ of $X$ that extends $\sigma_1, \dots, \sigma_k$ as follows. 
  By Property~2, $\sigma_{k-1}$ and $\sigma_k$ (and thus $\sigma'$ and $\sigma_k$) induce the same linear order
  $\tau$ on $X'\cap L(T_{k}) = L(T_{k-1})\cap L(T_k)$.
  Consider $\sigma'\cup \sigma_k$ as the edges of a directed graph on $X$ and
  observe that it is acyclic since $\sigma'$, $\sigma_k$ and~$\tau$ are linear
  orders. Hence, we can choose any topological ordering of the graph as~$\sigma$.
  By construction $\sigma$ extends $\sigma_1, \dots, \sigma_k$ and thus satisfies $T_1, \dots T_k$.
  This means that~$\mathcal I$ is a yes-instance of \pqprob.  
\end{proof}

We now will use a known similar characterization for proper level graphs to establish the equivalence of \pqprob and \tlp.
It was initially given in \cite{rsb-asf-01} and states that a straight-line drawing is level planar if and only if the endpoints of every pair of edges between adjacent levels are ordered consistently.
\begin{lemma}[\citen{rsb-asf-01}]
  \label{lem:levelPlanarEmbedding}
  A proper level graph $(G, \ell)$ on $k$ levels is level planar if and only if there exists linear orders $\sigma_1,\ldots,\sigma_k$ of the vertices on individual levels $\ell^{-1}(1),\ldots,\ell^{-1}(k)$, respectively, that satisfy the following property.
  For every pair $(a, b), (u, v)$ of disjoint edges with $i = \ell(a) = \ell(u) = 1 + \ell(b) = 1 + \ell(v)$, we have $a <_{\sigma_i} u \Leftrightarrow b <_{\sigma_{i+1}} v$.
\end{lemma}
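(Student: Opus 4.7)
The plan is to translate between linear orderings on the levels and $x$-coordinates in a drawing, proving each direction constructively.

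For the forward direction, I would start from a level-planar drawing $\Gamma$ of $(G,\ell)$ and take $\sigma_i$ to be the left-to-right order on $\ell^{-1}(i)$ induced by the $x$-coordinates in $\Gamma$. To verify the property, I consider any two vertex-disjoint edges $(a,b)$ and $(u,v)$ between levels $i{-}1$ and $i$. Since both corresponding curves are $y$-monotone and confined to the horizontal strip between the two levels, a Jordan-curve argument shows that if $a <_{\sigma_i} u$ while $v <_{\sigma_{i-1}} b$ (or vice versa), the two curves must cross inside the strip, contradicting planarity.

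For the reverse direction, starting from orderings $\sigma_1,\ldots,\sigma_k$ satisfying the property, I would construct an explicit straight-line drawing by placing each vertex $v$ at the point $(\mathrm{rank}_{\sigma_{\ell(v)}}(v),\,\ell(v))$ and drawing each edge as a straight segment. Properness of $(G,\ell)$ ensures every edge sits in a single open horizontal unit strip between adjacent levels and is automatically $y$-monotone. Non-crossing then follows by a case distinction: two edges in different strips have disjoint open $y$-ranges and cannot cross in their interiors; two edges sharing a vertex meet only at that vertex; and two disjoint edges within the same strip form a crossing exactly when the endpoint orders at the two bounding levels are inconsistent, which is precisely excluded by the hypothesized property.

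The main technical point I expect to dwell on is the forward direction's Jordan-curve argument, since one has to argue carefully that $y$-monotonicity of both curves together with opposite orderings of their endpoints at the two bounding levels forces an interior crossing (rather than, say, letting one curve swing around the other outside the strip). Once this is nailed down, the remainder is a routine verification, because in the backward direction the straight-line construction reduces all non-crossing checks to the elementary geometric fact that two straight segments with endpoints on two parallel lines cross iff their endpoint orders on the two lines disagree.
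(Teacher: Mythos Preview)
The paper does not actually prove this lemma: it is quoted verbatim as a known result from~\cite{rsb-asf-01} (Randerath et al.), so there is no in-paper proof to compare your proposal against. Your argument is the standard one and is correct; the forward direction is exactly the intermediate-value / Jordan-curve reasoning one expects, and the backward direction via the explicit straight-line embedding at integer ranks is the usual construction.

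One small remark: you silently corrected the index in the statement, writing $\sigma_{i-1}$ where the paper's statement has $\sigma_{i+1}$. Given $i = \ell(a) = \ell(u) = 1 + \ell(b) = 1 + \ell(v)$, the vertices $b,v$ live on level $i-1$, so your version is the internally consistent one; the paper's $\sigma_{i+1}$ appears to be a typo. It is worth flagging this explicitly rather than fixing it without comment.
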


Using \Cref{lem:pqprobCharacterization,lem:levelPlanarEmbedding}, we can now show that \pqprob and \tlp are linear-time equivalent.
\begin{lemma}\label{lem:pqprob2tlp}
  There exists a linear-time reduction from \pqprob to \tlp.
\end{lemma}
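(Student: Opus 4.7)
I plan to construct, from an instance $\mathcal{I} = (X, (T_1, \dots, T_k))$ of \pqprob, an equivalent instance $(G, \ell, (F_1, \dots, F_k))$ of \tlp in linear time. The guiding principle is Lemma~\ref{lem:pqprobCharacterization}: \pqprob amounts to finding per-level orders satisfying each $T_i$ while agreeing on common elements of adjacent levels. This matches exactly the format of \tlp, where Lemma~\ref{lem:levelPlanarEmbedding} turns the ``agreement'' requirement into the planarity condition of a proper level graph.

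\textbf{Graph skeleton.} I would use $k$ levels, one per PQ-tree. For every $x \in X$ appearing in PQ-trees $T_a, \dots, T_b$ (a consecutive range by assumption), I create a level-monotone path $x^a - x^{a+1} - \cdots - x^b$ with $x^i$ on level $i$. The vertices on level $i$ are thus in bijection with $L(T_i)$, and $G$ contains an edge between the copies of $x$ on levels $i$ and $i+1$ whenever $x \in L(T_i) \cap L(T_{i+1})$. This mirrors the bipartite structure seen in the proof of \Cref{thm:hv-to-rtlp}.

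\textbf{Laminar constraints and Q-nodes.} For each PQ-tree $T_i$, I derive a laminar family $F_i$ from its subtree structure: every inner node $u$ of $T_i$ contributes the constraint that the leaves in the subtree rooted at $u$ must be consecutive on level $i$. This directly captures every P-node of $T_i$. To also capture the stricter Q-node child orderings, which cannot be expressed via laminar consecutivity alone, I introduce, for each Q-node $q$ with children $c_1, \dots, c_m$, a small auxiliary gadget that routes edges from anchor vertices in each $L(c_j)$ to an auxiliary chain whose order is pinned down by laminar consecutivity alone. By Lemma~\ref{lem:levelPlanarEmbedding}, the planarity of these edges then forces the children of $q$ to appear in the prescribed order, up to global reversal. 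By distributing this gadgetry over the levels so each individual piece has size proportional to the relevant Q-node, the total construction remains linear.

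\textbf{Equivalence and main obstacle.} A yes-instance of \pqprob yields, via Lemma~\ref{lem:pqprobCharacterization}, per-level orders that satisfy each $T_i$ and agree on common elements of adjacent levels; these orders translate directly into a level-planar drawing satisfying all $F_i$, using Lemma~\ref{lem:levelPlanarEmbedding} to certify planarity of the inter-level matchings. Conversely, a \tlp drawing yields per-level orders satisfying each $F_i$ (and hence each $T_i$, thanks to the Q-node gadgets) that agree on common elements of adjacent levels (by Lemma~\ref{lem:levelPlanarEmbedding}), so Lemma~\ref{lem:pqprobCharacterization} produces the required global order on $X$. The main obstacle is the Q-node gadget: simulating the ``reversible sequence'' semantics of Q-nodes using only a laminar family of consecutivity constraints combined with edge planarity across adjacent levels, while keeping the gadget size linear in the Q-node arity so the overall reduction stays linear-time.
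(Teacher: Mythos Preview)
Your skeleton is exactly right: one level per PQ-tree, level-monotone paths for the elements, and equivalence via Lemmas~\ref{lem:pqprobCharacterization} and~\ref{lem:levelPlanarEmbedding}. The gap is precisely where you flag it yourself: the Q-node gadget. You never actually give the construction, and the constraints on it are tighter than you suggest. Any auxiliary level you insert next to level~$i$ must also carry copies of all of $L(T_i)$ (to preserve the inter-level matchings), the gadgets for nested Q-nodes on the same level must not interfere with one another, and the anchor edges must not create spurious planarity obstructions with the ordinary element-paths. None of this is impossible, but it is real work, and without it the proof is incomplete.

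The paper avoids gadgets entirely with a much cleaner idea. A Q-node with children $c_1,\dots,c_m$ is equivalent, on top of the laminar subtree constraints, to the additional consecutivity constraints $L(c_j)\cup L(c_{j+1})$ for $j=1,\dots,m-1$~\cite{boo-pta-75}. These overlap pairwise and are not laminar, but if you assign the odd-indexed pairs to one family and the even-indexed pairs to another, each family \emph{is} laminar. Doing this for every Q-node of $T_i$ (and putting all P-node subtree constraints into the first family) turns $T_i$ into two PQ-trees $T_i^1,T_i^2$, each representing a laminar family, whose conjunction is exactly $T_i$; see \Cref{fig:laminar}. Replacing every $T_i$ in the sequence by $T_i^1,T_i^2$ yields an equivalent \pqprob instance in which every tree is already laminar, and now your skeleton applies verbatim with no gadgetry at all. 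The whole preprocessing is linear, and the resulting \tlp instance has $2k$ levels instead of $k$.
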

\begin{proof}
Recall that \tlp requires a laminar set of consecutivity constraints on each level.
  We therefore first show how we can translate each PQ-tree into an equivalent set of PQ-trees that represent laminar consecutivity constraints.
  
  Each P-node corresponds to a constraint that requires consecutivity of the leaves in its subtree.
  Similarly, a Q-node can be formulated as multiple consecutivity constraints:
  for each pair of adjacent children of a Q-node, the union of the leaves of their subtrees must be consecutive \cite{boo-pta-75}.
  This allows us to translate each PQ-tree $T$ into two equivalent PQ-trees $T^1$ and $T^2$ that each represent laminar families of consecutivity constraints as follows.
  Assign all constraints that stem from P-nodes to $T^1$, and distribute the constraints that stem from pairs of children of Q-nodes alternatingly between $T^1$ and $T^2$; see \Cref{fig:laminar} for an example.
  Alternating between the two trees ensures that two constraints in the same tree do not intersect if one is not contained in the other.
  The PQ-trees $T^1$ and $T^2$ together contain precisely the consecutivity constraints of $T$ and can be computed in linear time. Given an instance of \pqprob, we can therefore compute in linear time an equivalent instance $\mathcal I = (X, (T_1, \dots, T_k))$ where each $T_i$ represents a laminar set of consecutivity constraints.
  
  We now construct an equivalent instance of \tlp as follows.
  We obtain the level graph $(G = (V, E), \ell)$ by introducing a level for each $i \in [k]$, which corresponds to the PQ-tree $T_i$.
  Each level $i$ contains a vertex for every element of $L(T_i)$, i.e., $V \coloneqq \{x_i : i \in [k], x \in L(T_i)\}$ and $\ell(x_i) = i$.
  Moreover, we connect vertices on adjacent levels that correspond to the same element, i.e., $E \coloneqq \{x_ix_{i+1} : i \in [k-1], x \in L(T_i) \cap L(T_{i+1}) \}$.
Finally, we annotate each level $i$ with the laminar family $T'_i$ of consecutivity constraints that is obtained from $T_i$ by replacing every leaf $x \in L(T_i)$ with the corresponding vertex $x_i$, resulting in the instance $\mathcal I' = (G, \ell, (T'_1, \dots, T'_k))$.
  
  For each level $i \in [k]$, the linear order of the vertices on level $i$ corresponds to the linear order $\sigma_i$ that a linear order $\sigma$ of $X$ induces for the vertices corresponding to $L(T_i)$.
  Note that $T_i$ and $T'_i$ enforce the same restrictions on these orders.
  Since the construction ensures that vertices on adjacent levels that correspond to the same element of $X$ are adjacent in $G$, the vertices are ordered consistently between adjacent levels by \Cref{lem:levelPlanarEmbedding}, which corresponds to Property 2 of \Cref{lem:pqprobCharacterization}.
  Therefore, $\mathcal I$ and $\mathcal I'$ are equivalent.
\end{proof}

\begin{lemma}
  \label{lem:tlp2pqprob}
  There exists a linear-time reduction from \tlp to \pqprob.
\end{lemma}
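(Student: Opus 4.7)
The plan is to mirror the construction of \Cref{lem:pqprob2tlp} in reverse: each level of the input \tlp instance becomes one PQ-tree of the \pqprob instance, while the edges of the level graph become the elements shared between adjacent PQ-trees. First, I would invoke \Cref{lem:level-monotone} to reduce to the case where the input graph $G=(V,E)$ is a matching without isolated vertices. In this setting every vertex $v$ on level $i$ has a unique incident edge, yielding a bijection $\phi_i : \ell^{-1}(i) \to E_i$, where $E_i \coloneqq \{e \in E : e \text{ is incident to some vertex on level } i\}$.

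Taking the ground set $X \coloneqq E$, I would define for every level $i \in [k]$ a PQ-tree $T'_i$ on leaf set $E_i$ that represents the laminar family $T_i$ after relabeling each vertex $v$ as $\phi_i(v)$. Since each $T_i$ is laminar, the corresponding $T'_i$ can be built in linear time as a PQ-tree consisting only of P-nodes. Any edge $uv$ with $\ell(u)=i$ and $\ell(v)=i+1$ is a leaf of exactly the two consecutive PQ-trees $T'_i$ and $T'_{i+1}$, which validates the consecutive-subsequence hypothesis of \pqprob. The overall construction is linear in the size of the input.

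For correctness, I would combine \Cref{lem:pqprobCharacterization} with \Cref{lem:levelPlanarEmbedding}. Per-level orders $\sigma_i$ of $\ell^{-1}(i)$ on the \tlp side correspond, via $\phi_i$, to per-level orders $\sigma'_i$ of $E_i$ on the \pqprob side. Property~1 of \Cref{lem:pqprobCharacterization} is immediate since $T'_i$ represents exactly the constraints of $T_i$ after relabeling. Property~2 is precisely the edge-order consistency condition of \Cref{lem:levelPlanarEmbedding}: the intersection $L(T'_i) \cap L(T'_{i+1})$ consists exactly of the matching edges between levels $i$ and $i+1$, and two such edges appear in the same relative order in $\sigma'_i$ and $\sigma'_{i+1}$ if and only if their endpoints are consistently ordered by $\sigma_i$ and $\sigma_{i+1}$. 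Both directions of the equivalence then follow by translating between the two characterizations. The main subtlety that I expect to need care is precisely this identification of edges with shared elements: without the matching assumption, a vertex of degree $d$ would split its identity across $d$ shared leaves on the same level, breaking the bijection needed to translate orders back and forth. This is exactly why the preliminary application of \Cref{lem:level-monotone} is essential, and why the reduction is linear rather than quadratic.
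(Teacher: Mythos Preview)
Your proposal is correct and follows essentially the same approach as the paper: invoke \Cref{lem:level-monotone} to obtain a matching, take the edge set as the ground set, relabel each level's laminar constraints via the unique incident edge to obtain the PQ-trees, and argue correctness by pairing \Cref{lem:pqprobCharacterization} with \Cref{lem:levelPlanarEmbedding}. You make the consecutive-subsequence condition and the role of the matching assumption more explicit than the paper does, but the construction and correctness argument are the same.
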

\begin{proof}
  Let $\mathcal I = (G, \ell, (C_1, \dots, C_k))$ be an instance of \tlp. 
  Recall that we can assume that each laminar family $C_i$ is represented by a tree with leaves $\ell^{-1}(i)$.
  Due to \Cref{lem:level-monotone}, we can additionally assume that~$G$ is a matching. We construct an equivalent instance $\mathcal I' = (X, (T_1, \dots, T_k))$ of \pqprob as follows.
  The elements of $X$ correspond bijectively to the edges of $G$.
  For each $i \in [k]$, the PQ-tree $T_i$ represents the consecutivity constraints obtained from $C_i$ by replacing each vertex with the unique edge it is incident to. As before, the linear order of the vertices on each level $i$ in a solution to $\mathcal I$ corresponds to a satisfying order of the corresponding PQ-tree $T_i$ in $\mathcal I'$.
  For the converse direction, we also need to ensure level planarity, that is the characterization of \Cref{lem:levelPlanarEmbedding}.
  As each edge of $\mathcal I$ between levels $i$ and $i+1$ is represented by a unique element of $X$ in $\mathcal I'$, this is ensured by point 2 of \Cref{lem:pqprobCharacterization}.
\end{proof}

This shows that both problems are linear-time equivalent.
Recall that \tlp admits a quadratic-time algorithm that uses a chain of linear-time reductions via the problems \textsc{Connected SEFE-2} and 2-fixed \simpq \cite{albfr-ibpcl-14,albfr-ibpcl-15}, where for the latter an explicit quadratic solution exists~\cite{br-spqoa-16}.
Especially reducing to \textsc{Connected SEFE-2} loses much of the structure contained in the instance, which is why we want to show that the quadratic algorithm for 2-fixed \simpq can also be directly applied to our problems.

The problem \simpq, introduced by Bläsius and Rutter~\cite{br-spqoa-16}, takes as input a DAG whose vertices are PQ-trees and where every arc with source $T_a$ and target $T_b$ injectively maps $L(T_b)$ to $L(T_a)$.
The question is whether there exist linear orders for the leaves of all PQ-trees such that for every arc, the chosen order for the source induces a suborder of the chosen order for the target.
While this problem is NP-complete in the general case, Bläsius and Rutter showed that the problem can be solved in quadratic time if the instance is what they call \emph{2-fixed}~\cite{br-spqoa-16}.
In the following, we show that we can model an instance of \tlp as an equivalent instance of \simpq where the DAG has maximum degree 2, which also implies 2-fixedness~\cite{br-spqoa-16}.

\begin{theorem}\label{thm:pqrob2simpq}
  There exists a linear-time reduction from \pqprob to 2-fixed \simpq.
  The resulting instance can be solved in quadratic time.
\end{theorem}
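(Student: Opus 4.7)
The plan is to model the pairwise agreement condition of \Cref{lem:pqprobCharacterization} explicitly as arcs of a \simpq DAG. Given a \pqprob instance $(X, (T_1, \dots, T_k))$, I would introduce, between every consecutive pair $T_i, T_{i+1}$, an auxiliary unconstrained PQ-tree $S_i$ whose leaf set is $L(T_i) \cap L(T_{i+1})$, realized as a single P-node over these leaves. The DAG then has $T_1, \dots, T_k$ together with $S_1, \dots, S_{k-1}$ as nodes and, for each $i \in [k-1]$, arcs $T_i \to S_i$ and $T_{i+1} \to S_i$, both equipped with the identity injection from $L(S_i)$ into $L(T_i)$ respectively $L(T_{i+1})$. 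Every $T_i$ has out-degree at most two (to $S_{i-1}$ and $S_i$) and every $S_i$ has in-degree two and out-degree zero, so the DAG has maximum degree two; by the result of Bläsius and Rutter, this already implies 2-fixedness.

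For correctness, I would invoke \Cref{lem:pqprobCharacterization} in both directions. Given orders $\sigma_1, \dots, \sigma_k$ satisfying Properties~1 and~2, I would use each $\sigma_i$ as the chosen order for $T_i$ and the restriction of $\sigma_i$ to $L(S_i)$ (equivalently the restriction of $\sigma_{i+1}$ by Property~2) as the order for $S_i$; Property~1 makes each $T_i$-order satisfy $T_i$, the bridges $S_i$ are unconstrained, and the arc conditions are witnessed by construction. Conversely, from a solution to the \simpq instance, the orders on the $T_i$ trivially fulfill Property~1, and for any two common elements in $L(T_i) \cap L(T_{i+1}) = L(S_i)$ both arcs $T_i \to S_i$ and $T_{i+1} \to S_i$ force the $T_i$- and $T_{i+1}$-orders to agree with the common $S_i$-order, hence with each other, giving Property~2. \Cref{lem:pqprobCharacterization} then produces a satisfying linear order of $X$.

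For the time bound, observe that an element $x$ occurring in the consecutive block $T_i, \dots, T_j$ appears in exactly $j-i$ of the bridges $S_i, \dots, S_{j-1}$, so summing over $X$ bounds the total size of the bridges and arcs by the input size; the intersections $L(T_i)\cap L(T_{i+1})$ themselves can be collected in a single sweep that records for each element its first and last occurrence. Hence the reduction runs in linear time, and the 2-fixed \simpq instance is solvable in quadratic time by Bläsius and Rutter. The main obstacle I expect is careful bookkeeping of the injection direction in \simpq: since an arc injects the target's leaves into the source's, the arcs must go \emph{from} the constrained trees $T_i$ \emph{to} the unconstrained bridges $S_i$, not vice versa, so that the $S_i$-order plays the role of a certificate of agreement rather than an additional constraint. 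Once the orientation is settled, the rest is a routine translation.
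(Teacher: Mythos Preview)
Your proposal is correct and essentially identical to the paper's: both insert an unconstrained P-node bridge $S_i$ on $L(T_i)\cap L(T_{i+1})$ between each consecutive pair, invoke \Cref{lem:pqprobCharacterization} for equivalence, and use maximum degree~$2$ of the resulting DAG to conclude 2-fixedness and the quadratic running time. The only discrepancy is the arc orientation---you direct arcs $T_i\to S_i$ while the paper writes them as $S_i\to T_i$ (making the $T_i$ sinks and the $S_i$ sources); your choice is the one consistent with the paper's own stated convention that an arc injects the target's leaves into the source's, so this is at worst a harmless notational slip on the paper's side.
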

\begin{proof}
  Let $\mathcal I = (X, (T_1, \dots, T_k))$ be an instance of \pqprob.
  The DAG in the equivalent instance of \simpq simply consists of the PQ-trees $T_1, \dots T_k$, together with a set of $k-1$ further PQ-trees that ensure that, between each pair of consecutive PQ-trees in $T_1, \dots T_k$, their common elements are ordered consistently.
  More specifically, for each $i \in [k - 1]$, we add a PQ-tree $S_i$ consisting of a single inner P-node with $|L(T_i) \cap L(T_{i+1})|$ leaves as children and add arcs from $S_i$ to $T_i$ and $T_{i+1}$ that map each pair of common elements in $T_i$ and $T_{i+1}$ to a distinct leaf of $S_i$.
  Thus, the DAG is a path with edges of alternating orientation, and the vertices thus alternate between being sinks ($T_1, \dots T_k$) and sources ($S_1, \dots S_{k-1}$).
  The equivalence of the instances is immediate by \Cref{lem:pqprobCharacterization}.
  Moreover, since every vertex of the DAG has degree at most 2, the instance is 2-fixed and can be solved in quadratic time~\cite{br-spqoa-16}.
\end{proof}

\section{Conclusion}
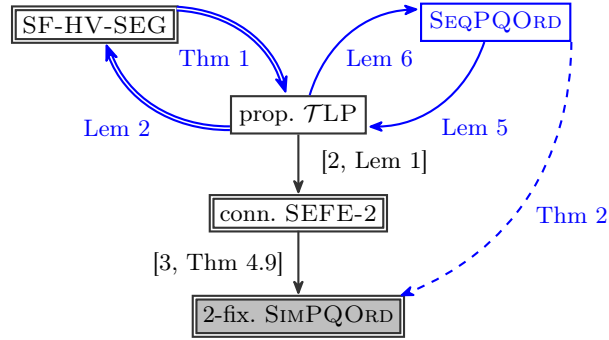
\begin{figure}[t]
  \centering
  \usetikzlibrary{arrows.meta,shapes.misc,quotes}
  \begin{tikzpicture}[>={Stealth[round]},thick,black!80,text=black,
                        def/.style={draw, rectangle}, new/.style={blue}, algo/.style={fill=lightgray},
                        quadratic/.style={double}, trivial/.style={"trivial",dashed}]

      \node (tree)   [def]                                           {prop.\ $\mathcal{T}$\textsc{LP}};
      \node (hv)     [def, above left=1 of tree, quadratic]          {\hvp};
      \node (seqpq)  [def, above right=1 of tree, new]               {\textsc{SeqPQOrd}};
      \node (csefe2) [def, below=1 of tree.base, quadratic]          {conn.\ \textsc{SEFE-2}};
      \node (simpq)  [def, below=1 of csefe2.base, quadratic, algo]  {2-fix.\ \textsc{SimPQOrd}};
  
      \draw[->, shorten >=1pt, bend left=35,pos=0.45]
          (hv) edge["{Thm \ref{thm:hv-to-rtlp}}",pos=0.6,auto=right,inner sep=2pt,new,quadratic] (tree)
          (tree) edge["{Lem \ref{lem:tlp-hv}}",new, quadratic] (hv);
  
      \draw[->, shorten >=1pt, bend left=35,pos=0.56]
          (seqpq) edge["{Lem \ref{lem:pqprob2tlp}}",new] (tree)
          (tree) edge["{Lem \ref{lem:tlp2pqprob}}",pos=0.4,auto=right,inner sep=2pt,new] (seqpq);

      \draw[->, shorten >=1pt]
          (tree) to["{\cite[Lem 1]{albfr-ibpcl-15}}",pos=0.45, inner sep=5pt] (csefe2);
      \draw[->, shorten >=1pt, bend left=34,new, dashed]
          (seqpq.south east) to["{Thm \ref{thm:pqrob2simpq}}"] (simpq);
      \draw[->, shorten >=1pt]
          (csefe2) to["{\cite[Thm 4.9]{br-spqoa-16}}", auto=right, inner sep=5pt] (simpq);
    \end{tikzpicture}
  \caption{
Relation of the problems we consider, with new problems and relations shown in blue.
    Dashed, single- and double-line arrows represent trivial, linear- and quadratic-time reductions, respectively.
    Double-border boxes can be solved (possibly via reductions) in quartic time, single-border boxes in quadratic time.
    Shaded boxes have an explicit solution not using reductions.
  }
  \label{fig:schema}
\end{figure}

In this paper, we resolve the remaining open question from the 1989 seminal works by Kratochv{\'\i}l and Matou{\v s}ek on segment intersection graphs \cite{km-igo-94,kra-asp-94,nf-ppa-92}.
When posing the question, the authors already suggested two reformulations of the problem in terms of forbidden volkswagen patterns in 2-layer drawings or forbidden cross patterns in the adjacency matrix.
Still, it took more than 30 years until another different perspective finally allowed us to show tractability using the constrained planar graph drawing problem \tlp.
Interestingly, \tlp itself is also a reformulation of another problem called \textsc{$k$-ary Tanglegrams} \cite{wsp-gkatl-12,albfr-ibpcl-15}, which arose in computational biology to visualize evolutionary histories of species.
In addition to this very visual solution, we also give another equivalent characterization as a constrained ordering problem that allows a direct solution without involved reductions.
\Cref{fig:schema} gives a graphical overview over the reductions and equivalences thus shown in this work.
Through the links established in our paper, we now know that we can view the same tractable problem through 6 different-yet-equivalent formulations, some of which arose naturally and independently, and taking different perspectives from either a graph drawing or entirely combinatorial standpoint.
It will be interesting to see whether further problems turn out to join this order of equivalent formulations.

\appendix
\section{Acknowledgements}
We want to thank Jan Kratochv{\'\i}l for introducing us to the ``Volkswagen-Problem'' at HOMONOLO'22 and, together with Jiří Fiala, also for organizing this very enjoyable workshop.
We also want to thank Marie Diana Sieper and Felix Klesen who joined us at the workshop for developing early approaches towards the \hvp problem.

\section{Funding}
Simon D.\ Fink was funded by the Vienna Science and Technology Fund (WWTF) [10.47379/ICT22029].
Matthias Pfretzschner was funded by the Deutsche Forschungsgemeinschaft (DFG, German Research Foundation) -- 541433306.
Peter Stumpf was funded by Czech Science Foundation grant no.
23-04949X.

\bibliographystyle{splncs04}
\bibliography{references}

\begin{thebibliography}{10}
\providecommand{\url}[1]{\texttt{#1}}
\providecommand{\urlprefix}{URL }
\providecommand{\doi}[1]{https://doi.org/#1}

\bibitem{albfr-ibpcl-14}
Angelini, P., {Da Lozzo}, G., {Di Battista}, G., Frati, F., Roselli, V.: The
  importance of being proper -- (in clustered-level planarity and t-level
  planarity). In: Duncan, C.A., Symvonis, A. (eds.) Proceedings of the 22nd
  International Symposium on Graph Drawing (GD'14). Lecture Notes in Computer
  Science, vol.~8871, pp. 246--258. Springer (2014).
  \doi{10.1007/978-3-662-45803-7_21}

\bibitem{albfr-ibpcl-15}
Angelini, P., {Da Lozzo}, G., {Di Battista}, G., Frati, F., Roselli, V.: The
  importance of being proper: (in clustered-level planarity and t-level
  planarity). Theor. Comput. Sci.  \textbf{571}, ~1--9 (2015).
  \doi{10.1016/J.TCS.2014.12.019}

\bibitem{br-spqoa-16}
Bl{\"{a}}sius, T., Rutter, I.: Simultaneous pq-ordering with applications to
  constrained embedding problems. {ACM} Trans. Algorithms  \textbf{12}(2),
  16:1--16:46 (2016). \doi{10.1145/2738054}

\bibitem{boo-pta-75}
Booth, K.S.: {PQ}-tree algorithms. Ph.D. thesis, University of California,
  Berkeley (1975)

\bibitem{bl-tcopi-76}
Booth, K.S., Lueker, G.S.: Testing for the consecutive ones property, interval
  graphs, and graph planarity using pq-tree algorithms. J. Comput. Syst. Sci.
  \textbf{13}(3),  335--379 (1976). \doi{10.1016/S0022-0000(76)80045-1}

\bibitem{brs-lpt-18}
Br{\"{u}}ckner, G., Rutter, I., Stumpf, P.: Level planarity: Transitivity vs.
  even crossings. In: Biedl, T., Kerren, A. (eds.) Proceedings of the 26th
  International Symposium on Graph Drawing and Network Visualization ({GD}
  '18). Lecture Notes in Computer Science, vol. 11282, pp. 39--52. Springer
  (2018). \doi{10.1007/978-3-030-04414-5_3}

\bibitem{brs-lpt-22}
Br{\"{u}}ckner, G., Rutter, I., Stumpf, P.: Level-planarity: Transitivity vs.
  even crossings. Electronic Journal of Combinatorics  \textbf{29}(4) (2022).
  \doi{10.37236/10814}

\bibitem{fpss-htm-13}
Fulek, R., Pelsmajer, M.J., Schaefer, M., {\v{S}}tefankovi{\v{c}}, D.:
  {Hanani-Tutte}, monotone drawings, and level-planarity. In: Pach, J. (ed.)
  Thirty Essays on Geometric Graph Theory, pp. 263--287. Springer (2013).
  \doi{10.1007/978-1-4614-0110-0_14}

\bibitem{hnz-ogi-91}
Hartman, I.B.A., Newman, I., Ziv, R.: On grid intersection graphs. Discrete
  Mathematics  \textbf{87}(1),  41--52 (Jan 1991).
  \doi{10.1016/0012-365X(91)90069-E}

\bibitem{jl-lpe-02}
J{\"{u}}nger, M., Leipert, S.: Level planar embedding in linear time. Journal
  of Graph Algorithms and Applications  \textbf{6}(1),  67--113 (2002).
  \doi{10.7155/jgaa.00045}

\bibitem{jlm-lpt-98}
J{\"{u}}nger, M., Leipert, S., Mutzel, P.: Level planarity testing in linear
  time. In: Whitesides, S. (ed.) \bibgd{6th}{98}. Lecture Notes in Computer
  Science, vol.~1547, pp. 224--237. Springer (1998).
  \doi{10.1007/3-540-37623-2_17}

\bibitem{kra-asp-94}
Kratochv{\'\i}l, J.: A special planar satisfiability problem and a consequence
  of its np-completeness. Discrete Applied Mathematics  \textbf{52}(3),
  233--252 (1994). \doi{10.1016/0166-218X(94)90143-0}

\bibitem{km-igo-94}
Kratochv{\'\i}l, J., Matou{\v s}ek, J.: Intersection graphs of segments.
  Journal of Combinatorial Theory, Series B  \textbf{62}(2),  289--315 (1994).
  \doi{10.1006/jctb.1994.1071}

\bibitem{nf-ppa-92}
Kratochv{\'\i}l, J., Neŝetril, J.: Problems proposed at the problem session of
  the prachatice conference on graph theory. In: Neŝetril, J., Fiedler, M.
  (eds.) Proceeding of the Fourth Czechoslovakian Symposium on Combinatorics,
  Graphs and Complexity, Annals of Discrete Mathematics, vol.~51, pp. 375--384.
  Elsevier (1992). \doi{10.1016/S0167-5060(08)70659-9}

\bibitem{kra-sgi-91}
Kratochvíl, J.: String graphs. {II}. recognizing string graphs is {NP}-hard.
  Journal of Combinatorial Theory, Series B  \textbf{52}(1),  67--78 (May
  1991). \doi{10.1016/0095-8956(91)90091-W}

\bibitem{km-nhr-89}
Kratochvíl, J., Matoušek, J.: {NP}-hardness results for intersection graphs.
  Commentationes Mathematicae Universitatis Carolinae  \textbf{030}(4),
  761--773 (1989), \url{https://eudml.org/doc/17790}, publisher: Charles
  University in Prague, Faculty of Mathematics and Physics

\bibitem{opa-top-79}
Opatrny, J.: Total ordering problem. SIAM Journal on Computing  \textbf{8}(1),
  111--114 (Feb 1979). \doi{10.1137/0208008}

\bibitem{rsb-asf-01}
Randerath, B., Speckenmeyer, E., Boros, E., Hammer, P.L., Kogan, A., Makino,
  K., Simeone, B., Cepek, O.: A satisfiability formulation of problems on level
  graphs. Electronic Notes in Discrete Mathematics  \textbf{9},  269--277
  (2001). \doi{10.1016/S1571-0653(04)00327-0}

\bibitem{sch-cos-10}
Schaefer, M.: Complexity of {Some} {Geometric} and {Topological} {Problems}.
  In: Eppstein, D., Gansner, E.R. (eds.) Graph {Drawing}. pp. 334--344.
  Springer, Berlin, Heidelberg (2010). \doi{10.1007/978-3-642-11805-0_32}

\bibitem{sch-tat-13}
Schaefer, M.: Toward a theory of planarity: Hanani-tutte and planarity
  variants. Journal of Graph Algorithms and Applications  \textbf{17}(4),
  367--440 (2013). \doi{10.7155/jgaa.00298}

\bibitem{wsp-gkatl-12}
Wotzlaw, A., Speckenmeyer, E., Porschen, S.: Generalized k-ary tanglegrams on
  level graphs: {A} satisfiability-based approach and its evaluation. Discret.
  Appl. Math.  \textbf{160}(16-17),  2349--2363 (2012).
  \doi{10.1016/J.DAM.2012.05.028}

\end{thebibliography}

\end{document}